\documentclass[12pt]{article}
\usepackage{amsmath,amsthm,txfonts,graphicx,subfigure}
\usepackage[T1]{fontenc}
\usepackage{fullpage}

\newcommand{\cng}{\mathrm{cng}}
\newcommand{\stc}{\mathrm{stc}}

\newcommand{\Bndry}{\theta} 
\newcommand{\In}{\iota}     

\newcommand{\ceil}[1]{\left\lceil #1 \right\rceil}
\newcommand{\floor}[1]{\left\lfloor #1 \right\rfloor}



\newcommand{\theoremname}{Theorem}
\newtheorem{thm}{\theoremname}[section]
\newcommand{\lemname}{Lemma}
\newtheorem{lem}[thm]{\lemname}
\newcommand{\corname}{Corollary}

\newcommand{\propname}{Proposition}

\newcommand{\claimname}{Claim}

\newcommand{\ead}[1]{\texttt{#1}}

\title{On spanning tree congestion of Hamming graphs}
\author{
Kyohei Kozawa\thanks{
Electric Power Development Co., Ltd.,
6-15-1, Ginza, Chuo-ku, Tokyo, 104-8165 Japan.}
\and
Yota Otachi\thanks{
Graduate School of Information Sciences, Tohoku University,
Sendai 980-8579, Japan.
E-mail address: \ead{otachi@dais.is.tohoku.ac.jp}}
}

\begin{document}
\maketitle

 \begin{abstract}
  We present a tight lower bound for the spanning tree congestion of Hamming graphs.
 \end{abstract}

 \section{Preliminaries}
 \label{sec:intro}
 The spanning tree congestion of graphs has been studied intensively~\cite{BFGOL2011stc,BKMO2011,CO09,Hru08,KO2011rook,KOY09stc,Law09,LO2010stc,LO2011stc,LRR09,OOUU2011tamc,Ost04,Ost2010,Ost2011stc,Sim87}.
 In this note, we study the spanning tree congestion of Hamming graphs.
 We present a lower bound for the spanning tree congestion of Hamming graphs.
 That is, in our terminology, we show that $\stc(K_{n}^{d}) \ge \frac{1}{d}(n^{d}-1) \log_{n} d$.
 It is known that $\stc(K_{n}^{d}) = O\left(\frac{1}{d} n^{d} \log_{n} d\right)$~\cite{LO2011stc}.
 Thus our lower bound is asymptotically tight.

 For a graph $G$, we denote its vertex set and edge set by $V(G)$ and $E(G)$, respectively.
 For $S \subseteq V(G)$, let $G[S]$ denote the subgraph induced by $S$.
 For an edge $e \in E(G)$, we denote by $G - e$ the graph obtained from $G$ by deleting $e$.
 Let $N_{G}(v)$ denote the neighborhood of $v \in V(G)$ in $G$; that is, $N_{G}(v) = \{u \mid \{u,v\} \in E(G)\}$.
 We denote the degree of a vertex $v \in V(G)$ by $\deg_{G}(v)$,
 and the maximum degree of $G$ by $\Delta(G)$; that is,
 $\deg_{G}(v) = |N_{G}(v)|$ and $\Delta(G) = \max_{v \in V(G)} \deg_{G}(v)$.
 A graph $G$ is \emph{$r$-regular} if $\deg_{G}(v) = r$ for every $v \in V(G)$.

 For $S \subseteq V(G)$,
 we denote the edge set of $G[S]$ by $\In_{G}(S)$, and 
 the \emph{boundary edge set} by $\Bndry_{G}(S)$; that is,
 $\In_{G}(S) = \{\{u,v\} \in E(G) \mid u, v \in S\}$ and
 $\Bndry_{G}(S) = \{\{u,v\} \in E(G) \mid \textrm{exactly one of } u,v \textrm{ is in } S \}$.
 We define the function $\In$ and $\Bndry$ also for a positive integer $s \le |V(G)|$ as
 $\In_{G}(s) = \max_{S \subseteq V(G), \ |S| = s} |\In_{G}(S)|$ and
 $\Bndry_{G}(s) = \min_{S \subseteq V(G), \ |S| = s} |\Bndry_{G}(S)|$.
 Let $T$ be a spanning tree of a connected graph $G$.
 The \emph{congestion} of $e \in E(T)$ as $\cng_{G}(e) = |\Bndry_{G}(L_{e})|$,
 where $L_{e}$ is the vertex set of one of the two components of $T - e$.
 The \emph{congestion of $T$ in $G$}, denoted by $\cng_{G}(T)$, is the maximum congestion over all edges in $T$.
 We define the \emph{spanning tree congestion} of $G$, denoted by $\stc(G)$, as
 the minimum congestion over all spanning trees of $G$.

 The \emph{$d$-dimensional Hamming graph $K_{n}^{d}$} is the graph with vertex set $\{0,\dots,n-1\}^{d}$
 in which two vertices are adjacent if and only if their corresponding $d$-dimensional vectors
 differ in exactly one place.
 It is evident that $K_{n}^{d}$ is $d(n-1)$-regular.
 The exact value of $\stc(K_{n}^{2})$ is known~\cite{KO2011rook}.
 Also, $\stc(K_{2}^{d})$ is determined asymptotically~\cite{Law09}.

 \section{The lower bound}
 Here, we present the lower bound. We need the following three lemmas.
 \begin{lem}
  [\cite{Bez99}]
  \label{lem:regular}
  If $G$ is $r$-regular and $S \subseteq V(G)$,
  then $|\Bndry_{G}(S)| = r |S| - 2|\In_{G}(S)|$.
 \end{lem}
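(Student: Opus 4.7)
The plan is to prove the identity by a standard double-counting of edge-endpoints incident to vertices in $S$. I would introduce the quantity $N = \sum_{v \in S} \deg_{G}(v)$, which counts, for each edge of $G$ having at least one endpoint in $S$, the number of its endpoints lying in $S$.

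First I would evaluate $N$ in two different ways. On the one hand, since $G$ is $r$-regular, every vertex of $S$ contributes exactly $r$ to the sum, so $N = r|S|$. On the other hand, each edge of $G$ that has both endpoints in $S$ (i.e., each edge of $\In_{G}(S)$) is counted twice in $N$, while each edge with exactly one endpoint in $S$ (i.e., each edge of $\Bndry_{G}(S)$) is counted once, and edges with no endpoint in $S$ contribute nothing. Hence $N = 2|\In_{G}(S)| + |\Bndry_{G}(S)|$. Equating the two expressions and solving for $|\Bndry_{G}(S)|$ yields the claim.

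There is essentially no obstacle here; the result is a one-line consequence of the handshake-style partition of incidences, and regularity is used only to replace $\sum_{v \in S} \deg_{G}(v)$ by $r|S|$. The only thing to state carefully is the partition of edges incident to $S$ into those fully inside $S$ and those crossing the boundary, so that the factor of $2$ in front of $|\In_{G}(S)|$ is justified.
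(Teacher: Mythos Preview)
Your proof is correct: the double-counting of incidences at vertices of $S$, partitioning edges meeting $S$ into those inside and those crossing the boundary, is exactly the standard argument, and the factor of $2$ is properly justified. The paper itself does not prove this lemma but merely cites it from~\cite{Bez99}, so there is nothing further to compare.
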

 \begin{lem}
  [\cite{STV01}]
  \label{lem:ham_num_edges}
  Let $G$ be a subgraph of $K_{n}^{d}$.
  If $G$ has $s$ vertices and $t$ edges, then $2t \le (n-1) s \log_{n} s$.
 \end{lem}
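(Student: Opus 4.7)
I would prove the lemma by induction on $d$.

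For the base case $d=1$, any subgraph of $K_n$ on $s\le n$ vertices has $t\le\binom{s}{2}$, so the claim reduces to $s-1\le(n-1)\log_n s$ for $1\le s\le n$. This follows from a one-line concavity argument: $f(s):=(n-1)\log_n s-(s-1)$ satisfies $f''<0$ with $f(1)=f(n)=0$, hence $f\ge 0$ on $[1,n]$.

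For the inductive step, I would slice along the last coordinate. Set $S_i=\{v\in V(G):v_d=i\}$, $s_i=|S_i|$, and for each $x\in\{0,\dots,n-1\}^{d-1}$ let $c_x=|\{i:(x,i)\in V(G)\}|$, so that $\sum_i s_i=\sum_x c_x=s$. Each $G[S_i]$ is a subgraph of $K_n^{d-1}$ and the remaining edges of $G$ form, above each $x$, a clique on $c_x$ vertices. Applying the inductive hypothesis to each slice and the base-case inequality to each fiber (since $c_x\le n$) yields
\[
2t\;\le\;(n-1)\sum_i s_i\log_n s_i\;+\;(n-1)\sum_x c_x\log_n c_x,
\]
with the convention $0\log 0=0$.

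The remaining step, which I expect to be the real obstacle, is the clean inequality
\[
\sum_i s_i\log_n s_i+\sum_x c_x\log_n c_x\;\le\;s\log_n s.
\]
I would derive this from subadditivity of Shannon entropy. Let $(X,Y)$ be uniformly distributed on $V(G)$, with $Y$ recording the last coordinate and $X\in\{0,\dots,n-1\}^{d-1}$ the remaining ones; then $H(X,Y)=\log s$ while $H(Y)=\log s-s^{-1}\sum_i s_i\log s_i$ and $H(X)=\log s-s^{-1}\sum_x c_x\log c_x$, so $H(X,Y)\le H(X)+H(Y)$ rearranges to exactly the desired inequality. (Equivalently, one may invoke the log-sum inequality or convexity of $x\log x$ applied to the two natural partitions of $V(G)$.) Substituting back closes the induction; note that equality is attained when $s=n^k$ and $V(G)$ is a $k$-dimensional subcube, so the bound is tight, which is a useful sanity check.
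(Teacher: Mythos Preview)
Your proof is correct. Note, however, that the paper under review does not itself prove this lemma: it simply quotes the result from Squier--Torrence--Vogt~\cite{STV01}, so there is no in-paper proof to compare against. Your argument---induction on $d$ by slicing along the last coordinate, bounding the within-slice edges by the inductive hypothesis and the within-fiber edges by the $d=1$ base case, then recombining via subadditivity of entropy $H(X,Y)\le H(X)+H(Y)$---is a clean and standard route to edge-isoperimetric bounds in product graphs (indeed, the key step is an instance of Shearer's entropy lemma applied to the two coordinate projections). The base-case verification via concavity of $(n-1)\log_n s-(s-1)$ on $[1,n]$ and the observation that every edge of $K_n^d$ lies in exactly one slice or one fiber are both correct, so the induction closes as you describe.
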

 \begin{lem}
  [\cite{CO09,KOY09stc}]
  \label{lem:lb_deg}
  For any connected graph $G$,
  $\stc(G) \ge \min_{s = \ceil{(|V(G)| - 1)/\Delta(G)}}^{\floor{|V(G)|/2}} \Bndry(s)$.
 \end{lem}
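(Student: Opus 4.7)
The plan is to show that any spanning tree $T$ of $G$ contains an edge $e^{*}$ such that one component $L_{e^{*}}$ of $T - e^{*}$ has $|L_{e^{*}}| \in I$, where $I := [\ceil{(n-1)/\Delta},\floor{n/2}]$, $n = |V(G)|$, and $\Delta = \Delta(G)$. Once such an $e^{*}$ is produced, the definition $\Bndry_{G}(s) = \min_{|S|=s}|\Bndry_{G}(S)|$ immediately yields
$\cng_{G}(e^{*}) = |\Bndry_{G}(L_{e^{*}})| \ge \Bndry_{G}(|L_{e^{*}}|) \ge \min_{s \in I} \Bndry_{G}(s)$,
whence $\cng_{G}(T) \ge \min_{s \in I}\Bndry_{G}(s)$. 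Taking the minimum over all spanning trees $T$ then gives the claimed bound on $\stc(G)$.

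To locate $e^{*}$, I would invoke the classical fact that every tree on $n \ge 2$ vertices admits a \emph{centroid}: a vertex $c$ such that every component of $T - c$ has at most $\floor{n/2}$ vertices. Since $T - c$ has exactly $\deg_{T}(c) \le \Delta$ components whose vertex sets partition $V(G) \setminus \{c\}$ into pieces summing to $n-1$, pigeonhole produces a component with vertex set $A$ satisfying $|A| \ge \ceil{(n-1)/\Delta}$; combined with the centroid property this gives $|A| \in I$. Now let $v$ be the unique neighbor of $c$ in $T$ lying in $A$ and set $e^{*} = \{c,v\} \in E(T)$. Deleting $e^{*}$ from $T$ detaches precisely $A$ from the rest, so we may take $L_{e^{*}} = A$, placing $|L_{e^{*}}|$ in $I$ as required.

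The only non-elementary ingredient is the existence of the centroid; everything else is pigeonhole and unwinding definitions. The main pitfall to avoid is conflating a balanced vertex removal with a balanced edge removal, which is why the argument passes from the centroid $c$ to one of its incident edges rather than stopping at $c$ itself. Specifically, the upper bound $\floor{n/2}$ on $|L_{e^{*}}|$ is inherited from the centroid property of $c$, while the lower bound $\ceil{(n-1)/\Delta}$ requires choosing the edge leading into a \emph{largest} component of $T - c$.
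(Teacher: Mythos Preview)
The paper does not prove this lemma; it is quoted without proof from the cited references~\cite{CO09,KOY09stc}. Your centroid argument is correct and is precisely the standard reasoning behind that result: pick a centroid $c$ of the spanning tree so that every branch at $c$ has at most $\floor{n/2}$ vertices, use pigeonhole on the $\deg_{T}(c)\le\Delta(G)$ branches to find one of size at least $\ceil{(n-1)/\Delta(G)}$, and take $e^{*}$ to be the edge joining $c$ to that branch. There is nothing to add or correct.
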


 \begin{thm}
  \label{thm:multi_lb}
  $\stc(K_{n}^{d}) \ge (n^{d}-1)\log_{n}d/d$ for $n,d \ge 3$.
 \end{thm}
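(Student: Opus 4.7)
The strategy is to apply Lemma~\ref{lem:lb_deg} to reduce the theorem to a size-by-size lower bound on $\Bndry_{K_n^d}(s)$, then use Lemmas~\ref{lem:regular} and~\ref{lem:ham_num_edges} in tandem to obtain a clean analytic lower bound, and finally minimize that bound over the relevant range of sizes. Since $K_n^d$ has $n^d$ vertices and is $d(n-1)$-regular, Lemma~\ref{lem:lb_deg} reduces the claim to showing $\Bndry_{K_n^d}(s) \ge (n^d-1)\log_n d / d$ for every integer $s$ with $\ceil{(n^d-1)/(d(n-1))} \le s \le \floor{n^d/2}$. For any $S \subseteq V(K_n^d)$ of size $s$, Lemma~\ref{lem:regular} gives $|\Bndry_{K_n^d}(S)| = d(n-1)s - 2|\In_{K_n^d}(S)|$, and Lemma~\ref{lem:ham_num_edges} bounds $2|\In_{K_n^d}(S)| \le (n-1)s\log_n s$; combining these yields $\Bndry_{K_n^d}(s) \ge g(s) := (n-1)s(d-\log_n s)$.

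I would then analyse the continuous function $g$. Differentiating gives $g'(s) = (n-1)(d - \log_n s - 1/\ln n)$, so $g$ is strictly unimodal with unique maximum at $s^{\ast} = n^d/e$. A short check shows that for $n,d \ge 3$ the lower endpoint $\ceil{(n^d-1)/(d(n-1))}$ stays below $s^{\ast}$ (it is at most roughly $n^d/6$) while the upper endpoint $\floor{n^d/2}$ exceeds $s^{\ast}$ (since $n^d \ge 27$ suffices to absorb the floor). Hence the minimum of $g$ on the integer range is attained at one of the two endpoints, so it suffices to verify the target inequality at each.

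At the lower endpoint, monotonicity on the increasing side of the mode gives $g\bigl(\ceil{(n^d-1)/(d(n-1))}\bigr) \ge g(s_0)$ with $s_0 := (n^d-1)/(d(n-1))$, and a direct computation yields
\[
g(s_0) \;=\; \frac{n^d - 1}{d}\, \log_n\!\left(\frac{d(n-1)\, n^d}{n^d - 1}\right).
\]
The elementary inequality $n^d(n-1) \ge n^d - 1$ shows that the argument of the logarithm is at least $d$, so $g(s_0) \ge (n^d-1)\log_n d / d$, as required. At the upper endpoint, monotonicity on the decreasing side of the mode gives $g(\floor{n^d/2}) \ge g(n^d/2) = (n-1)(n^d/2)\log_n 2$, and after clearing factors and using $n^d \ge n^d - 1$ the required inequality reduces to $2^{d(n-1)} \ge d^2$, which is immediate for $n,d \ge 3$.

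The only obstacle I anticipate is bookkeeping: one must carefully confirm that the ceiling $\ceil{(n^d-1)/(d(n-1))}$ really sits below $s^{\ast} = n^d/e$ and that the floor $\floor{n^d/2}$ really sits above it, so that the monotonicity steps at the two endpoints are applied on the correct side of the mode. Once the unimodality of $g$ is in hand, no conceptual difficulty remains and both endpoint inequalities reduce to elementary arithmetic valid for all $n,d \ge 3$.
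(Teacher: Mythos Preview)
Your proposal is correct and follows essentially the same route as the paper: combine Lemmas~\ref{lem:regular} and~\ref{lem:ham_num_edges} to get the lower bound $g(s)=(n-1)s(d-\log_n s)$, use unimodality (the paper writes the mode as $n^{d-1/\ln n}$, which is your $n^d/e$) to reduce to the two endpoints, and then verify the target inequality at each. The only cosmetic difference is that your endpoint comparison $2^{d(n-1)}\ge d^2$ handles all $n,d\ge 3$ uniformly, whereas the paper checks $d=2,3$ separately and uses $(\log_n d)/d\le(\log_n 2)/2$ for $d\ge 4$.
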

 \begin{proof}
  Since $K_{n}^{d}$ is $d(n-1)$-regular,
  \lemname{s}~\ref{lem:regular} and \ref{lem:ham_num_edges} imply that
  $\Bndry_{K_{n}^{d}}(s) \ge (n-1) s (d - \log_{n} s)$.
  Let $f(s) = (n-1)s(d-\log_{n}s)$ and $f'(s)$ be the derived function of $f(s)$.
  Then $f'(s) = (n-1)(d - 1/\ln n - \log_{n} s)$, and thus,
  $f(s)$ is increasing for
  $(n^{d} -1)/(d(n-1)) \le s \le n^{d - 1/\ln n}$
  and decreasing for
  $n^{d - 1/\ln n} \le s \le n^{d}/2$.
  Therefore,
  \begin{align*}
   \min_{s = \ceil{(n^{d}-1)/(d(n-1))}}^{\floor{n^{d}/2}} f(s)
   &=
   \min\left\{{f\left(\ceil{\frac{n^d-1}{d(n-1)}}\right), f\left(\floor{\frac{n^{d}}{2}}\right)}\right\}
   \\&\ge
   \min\left\{f\left(\frac{n^d-1}{d(n-1)}\right), f\left(\frac{n^{d}}{2}\right)\right\}
   \\&=
   \min\left\{\frac{n^{d}-1}{d}\left(d-\log_{n}\frac{n^{d}-1}{d(n-1)}\right), \frac{(n-1)n^{d}}{2}\left(d-\log_{n}\frac{n^{d}}{2}\right) \right\}
   \\&\ge
   \min\left\{\frac{n^{d}-1}{d} \log_{n} d, \frac{(n-1)n^{d}}{2}\log_{n} 2 \right\}.
  \end{align*}
  Thus, by \lemname~\ref{lem:lb_deg}, it holds that
  \[
  \stc(K_{n}^{d})
  \ge
  \min\left\{\frac{n^{d}-1}{d} \log_{n} d, \frac{(n-1)n^{d}}{2}\log_{n} 2 \right\}.
  \]
  By a simple calculation, we can see that $\frac{n^{d}-1}{d} \log_{n} d \le \frac{(n-1)n^{d}}{2}\log_{n} 2$ for $d = 2, 3$.
  Since $n^{d}-1 \le (n-1) n^{d}$ and $(\log_{n} d)/d \le (\log_{n} 2)/2$ for $d \ge 4$,
  the theorem holds.
 \end{proof}


\end{document}